\documentclass[a4paper,twocolumn,11pt,accepted=2025-10-16]{quantumarticle}
\pdfoutput=1

\expandafter\let\csname equation*\endcsname\relax
\expandafter\let\csname endequation*\endcsname\relax

\usepackage{enumerate,appendix}
\usepackage{amsmath, amsthm, amssymb,commath}
\usepackage{color,calc,graphicx}
\usepackage[usenames,dvipsnames,svgnames,table,cmyk,hyperref]{xcolor}
\usepackage[colorlinks]{hyperref}
\usepackage{optidef}

\usepackage{graphicx}
\usepackage{amsmath}
\usepackage{latexsym}
\usepackage{bbm}

\usepackage{booktabs}
\usepackage{multirow}
\usepackage{subcaption}
\usepackage{dcolumn}
\usepackage{mathrsfs}

\def \rank{\mathop{\rm rank}}
\def \be {\begin{equation}}
\def \ee {\end{equation}}

\newcommand{\Tr}{\mathrm{Tr}}

\def \sofc2{{\cal S}({\mathbb C}^2)}


\def\>{\rangle}
\def\<{\langle}

\newtheorem{definition}{Definition}
\newtheorem{theorem}{Theorem}
\newtheorem{lemma}[theorem]{Lemma}
\newtheorem{proposition}[theorem]{Proposition}

\def\Label#1{\label{#1}\ [\ \text{#1}\ ]\ }
\def\Label{\label}

\begin{document}

\title{Indefinite causal order strategy does not improve the estimation of group action}

\author{Masahito Hayashi}
\address{School of Data Science, The Chinese University of Hong Kong,
Shenzhen, Longgang District, Shenzhen, 518172, China}
\address{International Quantum Academy, Futian District, Shenzhen 518048, China}
\address{Graduate School of Mathematics, Nagoya University, Furo-cho, Chikusa-ku, Nagoya, 464-8602, Japan}
\orcid{0000-0003-3104-1000}
\email{hmasahito@cuhk.edu.cn, masahito@math.nagoya-u.ac.jp}
\maketitle

\begin{abstract}
We consider estimation of unknown unitary operation when the set of possible unitary operations is given by a projective unitary representation of a compact group.
We show that neither indefinite causal order strategy nor adaptive strategy improves the performance of this estimation when error function satisfies group covariance.
That is, the optimal parallel strategy gives the optimal performance even under indefinite causal order strategy and adaptive strategy.
To study this problem, we newly introduce the concept of 
generalized positive operator valued measure (GPOVM), and its convariance condition.
Using these concepts, we show the above statement.
\end{abstract}

\keywords{indefinite causal order strategy; adaptive strategy; parallel strategy;
Heisenberg scaling; estimation; group action}


\section{Introduction}\label{S1}
Recently, indefinite causal order strategies and adaptive strategies have attracted significant attention in quantum information theory \cite{BMQ}. The papers \cite{GE,GDP,HHLW,PLLP,ZP,RM,KW} investigated the advantages of adaptive strategies for quantum channel discrimination without considering the asymptotic regime. Similarly, the paper \cite{Chi} explored these advantages using indefinite causal order strategies. However, in the asymptotic setting, the benefit of adaptive strategies becomes less straightforward. The works \cite{MBHK,WW,FFRS} demonstrated that adaptive strategies provide no improvement for the asymmetric scenario of quantum channel discrimination, specifically the Stein-type bound, which extends the classical result from \cite{Ha09} to the quantum domain. Additionally, the paper \cite{SHW} showed that adaptive strategies do not enhance discrimination performance in the symmetric setting for classical-quantum channels, where the input is classical and the output is quantum, also as an extension of \cite{Ha09}. 
However, the papers \cite{HHLW,SHW,IM,BMQ22} established that adaptive strategies do offer an advantage in the symmetric setting for fully quantum channel discrimination, where both the input and output systems are quantum.
In contrast, the paper \cite{BMQ22} 
showed that indefinite causal order strategies cannot outperform parallel strategies for unitary discrimination under the uniform prior distribution
when the possible unitaries forms a finite group.

A more practical scenario in this context is channel estimation. In studies using parallel strategies, quantum Fisher information and Cramér-Rao-type bounds simplify the problem to optimizing these quantities based on input state choice, a topic covered extensively in \cite{Fujiwara-2003,Fujiwara-2004,H11,Giovannetti11,Giovannetti06,Giovannetti,Jonathan,Imai,Okamoto_2008,Nagata07,Thomas-Peter,Hayashi2024}. Recently, the papers \cite{AY,LHYY,KGAD,ZYC,LHYY24} examined the advantages of adaptive and indefinite causal order strategies in terms of Fisher information. 
In particular, the paper \cite{ZYC} showed that 
the super-Heisenberg scaling can be achieved
under indefinite causal order strategies.
However, it is well-known that Fisher information does not fully describe the asymptotic behavior of estimation errors under Heisenberg scaling. To capture this behavior, the direct evaluation of estimation errors, bypassing Fisher information and Cram\'{e}r-Rao-type bounds, is necessary. When the set of unknown unitaries forms a (projective) unitary representation of a group, group representation theory provides a powerful approach \cite{BDM,Luis,BBM04,CDPS,CDS,Imai_2009,H06,MH16-9,Chiribella-2011}. This method yields optimal performance without requiring asymptotic approximations. 
Although it predicts Heisenberg scaling, the optimal coefficient obtained differs from that predicted by Cram\'{e}r-Rao methods \cite{H06,H11,HVK,HLY,gorecki2020pi}. Recently, \cite{Haya04} highlighted this discrepancy from the perspective of mutual information, underscoring the limited applicability of Cramér-Rao approaches under Heisenberg scaling.
In fact, the paper \cite{GDP} showed no advantages of adaptive strategies for quantum channel estimation
under the covariant setting, but their analysis is limited to the case  when the channel fidelity \cite{RAGINSKY200111} is adopted for the estimation precision. 
General covariant error function was not studied in this scenario.
Further, no existing paper studied this problem for indefinite causal order strategies.

This paper focuses on unitary operation estimation in a general setting, where the model forms a projective unitary representation of a group $G$. We address this problem within a comprehensive framework that includes all indefinite causal order strategies \cite{BMQ}. Surprisingly, even under this broad framework, the conventional parallel strategy remains optimal when the error function satisfies a covariance condition. In other words, the optimal performance achievable by any strategy is also attainable using a parallel strategy.

Our approach is as follows. We define a generalized positive operator-valued measure (GPOVM) for unitary operations associated with a projective unitary representation. Extending the concept of group covariant measurements from \cite{Hol-Cov} and \cite[Chapter 4]{Holevo} to GPOVMs, we develop a GPOVM version of the quantum Hunt-Stein theorem \cite{Hol-Cov}, \cite[Theorem 4.3.1]{Holevo}. This theorem establishes that the optimal performance is achieved by a covariant GPOVM. Finally, we demonstrate that this performance is attainable using a parallel strategy with a detailed symmetric structure as described in \cite{MH16-9}. In our analysis, we show that the action of any covariant GPOVM can be simulated using a parallel strategy. 
This simulation is configured as a covariant measurement in a standard form, utilizing the input state in accordance with the given covariant GPOVM.

The remainder of this paper is structured as follows. Section \ref{S2} formulates our problem, encompassing both indefinite causal order and adaptive strategies \cite{BMQ,AY}. Section \ref{S3} provides a detailed analysis of the problem's structure. Section \ref{S4} demonstrates that the optimal performance under the framework described in Section \ref{S2} can be achieved using a parallel strategy.

\section{Formulation}\label{S2}
Given a compact group $G$ with Haar measure $\mu$, 
we consider $n$ projective representations $f_j$ on the quantum system
${\cal H}_j$ for $j=1, \ldots, n$.
We are allowed to employ various strategies including 
adaptive strategy and indefinite causal order strategy.
To describe our general strategies,
we describe each channel, i.e., each unitary action by 
$d_j$ times of its Choi representation, where
$d_j$ is the dimension of the $j$-th system ${\cal H}_j$.
We describe the unitary $f_j(g)$ 
by $|f_j(g)\rangle\rangle\langle \langle f_j(g)|$.
Here, given a matrix $A=\sum_{k,k'} a_{k,k'}|k\rangle \langle k'|$
on the space ${\cal H}'$ spanned by $\{|k\rangle \}_k$,
the vector $|A\rangle \rangle \in {\cal H}' \otimes {\cal H}'$
is defined as
\begin{align}
|A\rangle\rangle:= \sum_{k,k'} a_{k,k'} |k,k'\rangle.\Label{AHY}
\end{align}

In this case, we denote the input and output systems by 
${\cal H}_{j,I}$ and ${\cal H}_{j,O}$, respectively.
For example, when ${\cal H}={\cal H}_1 \oplus {\cal H}_2$, we have
\begin{align}
|I_{\cal H}\rangle\rangle= 
|I_{{\cal H}_1}\rangle\rangle \oplus |I_{{\cal H}_2}\rangle\rangle. \label{NJ1}
\end{align}
When ${\cal H}={\cal H}_1 \otimes {\cal H}_2$, we have
\begin{align}
|I_{\cal H}\rangle\rangle= 
|I_{{\cal H}_1}\rangle\rangle \otimes |I_{{\cal H}_2}\rangle\rangle. \label{NJ2}
\end{align}

Now, the whole system is 
$(\otimes_{j=1}^n {\cal H}_{j,I})\otimes (\otimes_{j=1}^n {\cal H}_{j,O})$,
the group $G$ acts only on the output system 
${\cal H}:=(\otimes_{j=1}^n {\cal H}_{j,O})$, and the input system
${\cal K}:=(\otimes_{j=1}^n {\cal H}_{j,I})$ 
can be considered as multiplicity space.
We denote the dimension of ${\cal H}$ by $d$. So, $\dim{\cal K}=d$. 
The application of $g$ on ${\cal H}$ is written as
$f(g):= \otimes_{j=1}^n  f_j(g)$ and its state is written as
$|f(g)\rangle\rangle \langle\langle f(g)|$.
The average with respect to the Haar measure $\mu$ is 
\begin{align}
\rho_{\mu}:=&\int_{G}
|f(g)\rangle\rangle \langle\langle f(g)|\mu(dg).
\end{align}

Our strategy is written as a set of measurement operators $\{M_k\}$, 
where a measurement operator $M_k$ is given as a positive semi-definite operator on ${\cal K} \otimes {\cal H}$.
Since the full probability needs to sum to
$1$ for any element $g \in G$, 
the set of measurement operators $\{M_k\}$ needs to satisfy
the condition
\begin{align}
\Tr \Big(\sum_{k}M_k \Big)\rho_{\mu}=1\label{AA}.
\end{align}
In fact, indefinite causal order strategy 
and adaptive strategy satisfy 
the condition \eqref{AA}
even when $n$ is larger than $2$.
This is because the full probability needs to be $1$
when the channel is prepared as the average of $f(g)$.
Therefore, it is natural to impose 
the condition \eqref{AA} to possible strategies.

Here, in order to write down these conditions in a general way,
using a linear function $F$ from the set of Hermitian matrices on 
${\cal K}_A \otimes {\cal K}_B$ to a linear space $V$ and 
an element $C\in V$,
we consider a general linear condition
\begin{align}
F(\sum_x M_x)=C. \label{NM}
\end{align}
For example, the condition for adaptive strategies is given 
in \cite[Algorithm 1]{AY},
\cite[Section III-B]{BMQ} as
\begin{align}
&\Tr(\sum_x M_x)=\dim {\cal K}_B  \label{AD0}
\\
&\Tr_{(A,k+1),\ldots,(A,n),(B,k+1),\ldots,(B,n)} (\sum_x M_x)\notag\\
=&
I_{(B,k)}\Tr_{(A,k+1),\ldots,(A,n),(B,k),(B,k+1),\ldots,(B,n)} (\sum_x M_x)
 \label{AD}
\end{align}
for $k=1, \ldots,n$.
That is, while a POVM $M=\{M_x\}_x$
satisfies the condition $\sum_x M_x=I $,
our measurement satisfies the condition \eqref{NM}.
We call this measurement a generalized positive valued measure (GPOVM)
with the condition \eqref{NM}.

In contrast, 
the paper \cite[Section III-C]{BMQ22} gives the condition for 
an indefinite causal order strategy as
\begin{align}
\Tr (\sum_x M_x) (C_1 \otimes C_2 \otimes \cdots \otimes C_n)
=1
\end{align}
for any $n$ Choi matrices $C_1 , C_2, \cdots, C_n$.
While the above condition is given as a general linear condition \eqref{NM}, it has complicated form.
For example, 
the paper \cite[Eqs. (4),(5),(6)]{BMQ}
writes down the form \eqref{NM} for $n=2$
and the paper \cite[Appendix A]{BMQ22} does it for $n=3$.

In the following, we study the optimization of our measurement to estimate the unknown 
action $g\in G$
under the condition \eqref{AA}.
For this aim, we introduce covariant error function
$w(g,\hat{g})$, where $g$ is the true action and $\hat{g}$ is our guess.
Since the above case addresses only the case with discrete outcomes,
we extend the condition \eqref{AA} to the case with continuous outcomes.
Our measurement is given as a generalized 
positive operator valued measure (GPOVM)
$M$ over the group $G$, which is formulated as a generalization of 
positive operator valued measure (POVM) \cite{Holevo} as follows.

\begin{definition}[Generalized positive operator valued measure]
We denote the set of Borel sets of $G$ by ${\cal B}(G)$.
A generalized 
positive operator valued measure (GPOVM) $M$ is given as a map from ${\cal B}(G)$
to the set of positive semi-definite operators on ${\cal K}\otimes {\cal H}$.
It needs to satisfy the conditions
\begin{align}
M(\emptyset)=0, \quad \Tr M(G)\rho_\mu=1.\label{NB1}
\end{align}
and
\begin{align}
M(\cup_{j}B_j)=\sum_{j}M(B_j)\label{NB2}
\end{align}
for countable Borel sets $\{B_j\}_j$
with the condition $B_j\cap B_{j'} = \emptyset $ with $j\neq j'$.
\end{definition}

We introduce the covariant measurement, whose POVM version was introduced in
\cite{Hol-Cov}, \cite[Chapter 4]{Holevo}.

\begin{definition}[Covariant]
We say that a GPOVM $M$ is covariant when
\begin{align}
M(B_g):= f(g)^\dagger M(B) f(g)
\end{align}
for $B \in {\cal B}(G)$ and 
\begin{align}
B_g:= \{ g g'\}_{g'\in B}.\label{NMA}
\end{align}
\end{definition}

We consider 
the subspace ${\cal H}_0$ spanned by
$\{|f(g)\rangle\rangle\}_{g \in G}$
and denote the projection to this subspace ${\cal H}_0$ by $ P_0$.
As shown in Appendix \ref{A1}, we have the following lemma. 
\begin{lemma}\Label{LE1}
There exists a positive semi-definite matrix $T$ 
on ${\cal H}_0$ 
such that
\begin{align}
P_0 M(B) P_0 &= \int_B f(g)^\dagger T f(g) \mu(d g),\Label{SLP} \\
\Tr T \rho_\mu &= 1\label{NB8}.
\end{align}
\end{lemma}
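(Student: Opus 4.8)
The plan is to establish this as the GPOVM analogue of Holevo's characterization of group-covariant POVMs (\cite{Hol-Cov}, \cite[Chapter 4]{Holevo}), compressed to the subspace ${\cal H}_0$. The statement concerns only $P_0 M(B) P_0$, which is natural because $\rho_\mu$ is supported on ${\cal H}_0$ (it is built from the vectors $|f(g)\rangle\rangle\in{\cal H}_0$) and because the unitary $f(g)$ appearing in the covariance condition maps $|f(h)\rangle\rangle$ to a scalar multiple of $|f(gh)\rangle\rangle$, so that the orbit $\{|f(g)\rangle\rangle\}_{g\in G}$, and hence ${\cal H}_0$ and $\rho_\mu$, is preserved. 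In particular $f(g)$ commutes with $P_0$, and $f(g)\rho_\mu f(g)^\dagger=\rho_\mu$ follows from the left-invariance of $\mu$. I would record these invariances first, since every later step relies on them.

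The core is to show that $B\mapsto P_0 M(B)P_0$ is absolutely continuous with respect to $\mu$ and to read off $T$ from its density at the identity. Consider the scalar set function $\Lambda(B):=\Tr P_0 M(B)P_0$, which is finite since ${\cal H}_0$ is finite-dimensional. Using the covariance $M(B_g)=f(g)^\dagger M(B)f(g)$, the commutation of $f(g)$ with $P_0$, and cyclicity of the trace, one obtains $\Lambda(B_g)=\Lambda(B)$, so $\Lambda$ is a finite left-invariant measure on $G$; by uniqueness of the Haar measure $\Lambda=c\,\mu$ for some $c\ge 0$, whence $\Lambda\ll\mu$. Since $M(B)\ge 0$ gives the domination $|\langle\psi| M(B)|\psi\rangle|\le\Lambda(B)\,\|\psi\|^2$ for $\psi\in{\cal H}_0$, every matrix element of $P_0M(\cdot)P_0$ is absolutely continuous with respect to $\mu$ with a bounded Radon--Nikodym derivative, and these derivatives assemble into a measurable family of positive operators $T_g\ge 0$ on ${\cal H}_0$ with $P_0M(dg)P_0=T_g\,\mu(dg)$. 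Feeding this back into covariance and using left-invariance of $\mu$ yields $\int_B T_{gh}\,\mu(dh)=\int_B f(g)^\dagger T_h f(g)\,\mu(dh)$ for all Borel $B$, hence $T_{gh}=f(g)^\dagger T_h f(g)$ for $\mu$-a.e.\ $h$; setting $h=e$ and defining $T:=T_e\ge 0$ gives exactly \eqref{SLP}.

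For the normalization \eqref{NB8}, I would combine \eqref{NB1} with the invariances from the first step. Because $\rho_\mu=P_0\rho_\mu P_0$, the condition $\Tr M(G)\rho_\mu=1$ reads $\Tr\bigl(P_0M(G)P_0\,\rho_\mu\bigr)=1$; substituting \eqref{SLP} with $B=G$, applying cyclicity of the trace, and using $f(g)\rho_\mu f(g)^\dagger=\rho_\mu$ turns the integrand into the constant $\Tr T\rho_\mu$, so that normalized Haar measure ($\mu(G)=1$) gives $\Tr T\rho_\mu=1$.

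The delicate step is the absolute continuity: covariance by itself does not force a measure to be absolutely continuous, and the argument genuinely uses compactness of $G$ through the uniqueness of the Haar measure applied to the invariant scalar measure $\Lambda$. The remaining work is measure-theoretic bookkeeping, namely selecting one positive operator-valued density $T_g$ from the scalar Radon--Nikodym derivatives and upgrading the almost-everywhere relation $T_{gh}=f(g)^\dagger T_h f(g)$ to a genuine identity at $h=e$; this is routine given the uniform bound $\|T_g\|\le c$ obtained above.
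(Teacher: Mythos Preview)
Your argument is correct and conceptually clean; it is, however, a genuinely different route from the paper's. You extract $T$ abstractly: the scalar measure $\Lambda(B)=\Tr P_0M(B)P_0$ is left-invariant by covariance, so by uniqueness of the Haar measure on a compact group $\Lambda=c\,\mu$; this forces every matrix element of $P_0M(\cdot)P_0$ to be $\mu$-absolutely continuous with bounded density, and the operator-valued Radon--Nikodym derivative, straightened out by the covariance relation at a generic point, gives $T$. The paper instead constructs $T$ explicitly as a subsequential limit of $P_0 M(U_\delta)P_0/\mu(U_\delta)$ for shrinking neighbourhoods $U_\delta$ of the identity, using the identity $\Tr M(B)\rho_\mu=\mu(B)$ to obtain the eigenvalue bound $P_0M(B)P_0\le \mu(B)\,\lambda_{\min}(\rho_\mu)^{-1}P_0$ and hence compactness; \eqref{SLP} is then verified by a hands-on approximation with nested sets $R_{\delta',\delta,g}\subset U_{\delta',g}$. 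Your approach is shorter and isolates the single structural ingredient (uniqueness of Haar measure) behind absolute continuity, while the paper's approach is more constructive, in the spirit of building the density as a Lebesgue-type derivative. One small remark: your phrase ``setting $h=e$'' is not literally available since $T_h$ is only defined $\mu$-a.e.; the clean fix, which your last paragraph alludes to, is to pick any $h_0$ in the full-measure set where the covariance relation holds and set $T:=f(h_0)T_{h_0}f(h_0)^\dagger$ (the projective phases cancel in the conjugation).
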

Due to the relation
$\Tr M(B) |f(g)\rangle\rangle \langle\langle f(g)|
=\Tr P_0 M(B) P_0 |f(g)\rangle\rangle \langle\langle f(g)|$,
it is sufficient to treat $P_0 M(B) P_0$ instead of $M(B)$.

The condition \eqref{NB8}
will be simplified in the end of Section \ref{S3}.
When $M_e$ is given as a positive semi-definite operator $T$,
the above GPOVM is written as $M_T$.

To study the precision of our measurement,
we employ 
the estimation error $w(g,\hat{g})$ 
between the true value $g$ and our estimate $\hat{g}$.
When our measurement is given by GPOVM $M$,
the average of $w$ with the true $g$ is given as
\begin{align}
D_{w,g}(M):=
\int_G w(g,\hat{g}) \Tr M(d \hat{g}) 
|f(g)\rangle\rangle\langle\langle f(g)|\label{SB1}
\end{align}
because $\Tr M(d \hat{g}) |f(g)\rangle\rangle\langle\langle f(g)|$
describes the distribution of the estimate $\hat{g}$
when the true value is $g$.
The Bayesian average under the Haar measure $\mu$ is given as 
\begin{align}
D_{w,\mu}(M):=
\int_G D_{w,g}(M)\mu(d g).\label{SB2}
\end{align}
The minimax criterion focuses on the following value
\begin{align}
D_{w}(M):=\max_{g\in G} D_{w,g}(M).\label{SB3}
\end{align}

Holevo \cite[Theorem 2]{Hol-Cov}, \cite[Theorem 4.3.1]{Holevo}
showed the equivalence between the 
minimax criterion and the Bayeian criterion with Haar measure in the state estimation, which is called the quantum Hunt-Stein theorem (Also, see 
\cite[Theorem 4.1]{Group2}).
As shown below, we establish its extension to the GPOVM version, which 
shows the equivalence between 
the minimization of $D_{w,\mu}(M)$ and the minimization of $D_{w}(M)$.

\begin{theorem}\label{Th1}
When the estimation error $w(g,\hat{g})$ satisfies
the condition 
\begin{align}
w(g,\hat{g})=w(g'g,g'\hat{g})
\label{NMG}
\end{align}
for $g,\hat{g},g'\in G$,
we have
\begin{align}
&\min_{M:\hbox{GPOVM}}D_{w,\mu}(M)\notag\\
=&\min_{M:\hbox{cov. GPOVM}}D_{w,\mu}(M) \notag\\
=&\min_{T: \rank T=1, \eqref{NB8}}
D_{w,\mu}(M_T) \notag\\
=&\min_{M:\hbox{GPOVM}}D_{w}(M)\notag\\
=&\min_{M:\hbox{cov. GPOVM}}D_{w}(M).
\end{align}
\end{theorem}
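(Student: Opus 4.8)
The plan is to establish the five-fold equality through a short cycle of inequalities resting on three ingredients: group averaging of GPOVMs, the representation of covariant GPOVMs from Lemma \ref{LE1} combined with an extreme-point argument, and the elementary fact that a maximum dominates an average. Two inclusions are immediate: since every covariant GPOVM is a GPOVM, restricting the feasible set can only increase the minimum, so $\min_{M}D_{w,\mu}(M)\le\min_{M:\mathrm{cov}}D_{w,\mu}(M)$ and likewise $\min_{M}D_{w}(M)\le\min_{M:\mathrm{cov}}D_{w}(M)$.

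First I would prove $\min_{M:\mathrm{cov}}D_{w,\mu}(M)\le\min_{M}D_{w,\mu}(M)$ by symmetrisation. Given any GPOVM $M$, set $\bar M(B):=\int_{G}f(h)^{\dagger}M(B_{h})f(h)\,\mu(dh)$. Using the left invariance of the Haar measure and the (projective) composition rule for $f$, one checks that $\bar M$ satisfies the covariance condition; its admissibility is inherited from $M$, the normalisation \eqref{NB1} surviving because $B_{h}=G$ when $B=G$ and because $f(h)\rho_{\mu}f(h)^{\dagger}=\rho_{\mu}$. The key computation is that each translate $B\mapsto f(h)^{\dagger}M(B_{h})f(h)$ has the same Bayes risk as $M$: substituting $\hat g\mapsto h\hat g$ in the estimate integral \eqref{SB1} and $g\mapsto hg$ in the Bayes integral \eqref{SB2}, then invoking the invariance \eqref{NMG} of $w$ and of $\mu$, all translates yield $D_{w,\mu}(M)$. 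Since $D_{w,\mu}$ is linear in the operator-valued measure, averaging gives $D_{w,\mu}(\bar M)=D_{w,\mu}(M)$, and taking the infimum over $M$ proves the claim, hence $\min_{M}D_{w,\mu}=\min_{M:\mathrm{cov}}D_{w,\mu}$.

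Next I would reduce covariant GPOVMs to rank-one $T$. By Lemma \ref{LE1} and the remark following it, a covariant GPOVM enters the risk only through $P_{0}M(\cdot)P_{0}$, which is parametrised by a positive semidefinite $T$ on $\cH_{0}$ subject to \eqref{NB8}, so that $M=M_{T}$. The map $T\mapsto D_{w,\mu}(M_{T})$ is affine, and the feasible set $\{T\ge0:\Tr T\rho_{\mu}=1\}$ is convex and compact (compactness follows since $\rho_{\mu}$ is invertible on $\cH_{0}$, so $\Tr T$ is bounded on the set). A linear functional on a compact convex set attains its minimum at an extreme point, and the extreme points of this set are exactly the rank-one $T$; this yields $\min_{M:\mathrm{cov}}D_{w,\mu}(M)=\min_{T:\ \rank T=1,\ \eqref{NB8}}D_{w,\mu}(M_{T})$.

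Finally I would link the Bayesian and minimax values. For any GPOVM, $D_{w}(M)=\max_{g}D_{w,g}(M)\ge\int_{G}D_{w,g}(M)\mu(dg)=D_{w,\mu}(M)$, so $\min_{M}D_{w}\ge\min_{M}D_{w,\mu}$. On the other hand, for a covariant GPOVM the same change of variables used above (now with the covariance applied pointwise in $g$ and \eqref{NMG}) shows that $D_{w,g}(M)$ does not depend on $g$; hence $D_{w}(M)=D_{w,\mu}(M)$ on covariant GPOVMs, giving $\min_{M:\mathrm{cov}}D_{w}=\min_{M:\mathrm{cov}}D_{w,\mu}$. Combining these with the immediate inclusions closes the cycle
\begin{align}
\min_{M}D_{w,\mu}\le\min_{M}D_{w}\le\min_{M:\mathrm{cov}}D_{w}=\min_{M:\mathrm{cov}}D_{w,\mu}=\min_{M}D_{w,\mu},\notag
\end{align}
forcing equality throughout and, together with the rank-one reduction, all five quantities coincide. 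I expect the main obstacle to be the symmetrisation step: verifying simultaneously that $\bar M$ remains an admissible GPOVM (especially that \eqref{NB1} is preserved via $f(h)\rho_{\mu}f(h)^{\dagger}=\rho_{\mu}$) and that Haar translation leaves the Bayes risk invariant, since this is where the projective phases, the invariance of $\rho_{\mu}$, and the measure-theoretic handling of the operator-valued integral must all be controlled at once.
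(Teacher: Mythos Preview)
Your proposal is correct and follows essentially the same route as the paper: group-average an arbitrary GPOVM to obtain a covariant one with the same Bayes risk, reduce covariant GPOVMs to rank-one $T$ by an extreme-point argument on the compact convex set $\{T\ge 0:\Tr T\rho_\mu=1\}$, and then close the Bayes--minimax gap. The only cosmetic difference is that the paper derives $\min_{M}D_w\ge\min_{M:\mathrm{cov}}D_w$ via the inequality $D_w(\overline{M})\le D_w(M)$ (swapping a maximum and an integral), whereas you argue directly that $D_{w,g}(M)$ is constant in $g$ for covariant $M$; these are equivalent standard moves in the Hunt--Stein argument.
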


\begin{proof}
Since the set $\{M_T:\rank T=1, \eqref{NB8}\}$
forms the set of extremal points of the set of 
covariant GPOVMs, we have the relation
\begin{align}
\min_{M:\hbox{cov. GPOVM}}D_{w,\mu}(M) 
=\min_{T: \rank T=1, \eqref{NB8}}
D_{w,\mu}(M_T) .\label{VB1}
\end{align}
Next, we have the relations
\begin{align}
& \min_{M:\hbox{cov. GPOVM}}D_{w}(M) \notag\\
\ge &\min_{M:\hbox{GPOVM}}D_{w}(M) 
\ge \min_{M:\hbox{GPOVM}}D_{w,\mu}(M).\label{VB2}
\end{align}
and
\begin{align}
\min_{M:\hbox{cov. GPOVM}}D_{w,\mu}(M) 
\ge \min_{M:\hbox{GPOVM}}D_{w,\mu}(M).\label{VB3}
\end{align}

For a GPOVM $M$ and an element $g\in G$, 
we define the GPOVM $M_g$ as
\begin{align}
M_g(B):=f(g) M(B_g) f(g)^\dagger \label{ZSI}
\end{align}
for $B \in {\cal B}(G)$.
Then, we have
\begin{align}
& D_{w,g'}(M_g) \notag\\
\stackrel{(a)}{=}&
\int_G w(g',\hat{g}) \Tr M_g(d \hat{g}) 
|f(g')\rangle\rangle\langle\langle f(g')|
 \notag\\
\stackrel{(b)}{=}&
\int_G w(g', g\hat{g}) \Tr f(g) M(d \hat{g}) f(g)^\dagger
|f(g')\rangle\rangle\langle\langle f(g')| \notag\\
\stackrel{(c)}{=}&
\int_G w(g^{-1} g', \hat{g}) \Tr M(d \hat{g}) 
|f(g^{-1}g')\rangle\rangle\langle\langle f(g^{-1} g')|
 \notag\\
=& D_{w,g^{-1} g'}(M), \label{BNZ}
\end{align}
where 
$(a)$ follows from the definition \eqref{SB1} of $D_{w,g'}(M)$
$(b)$ follows from the definitions
\eqref{NMA} and \eqref{ZSI} of $B_g$ and $M_g$,
$(c)$ follows from the condition \eqref{NMG}.
Thus,
\begin{align}
& D_{w,\mu}(M_g) 
\stackrel{(a)}{=}\int_G 
D_{w,g'}(M_g) \mu(d g') \notag\\
\stackrel{(b)}{=}&\int_G 
D_{w,g^{-1} g'}(M) \mu(d g') 
\stackrel{(c)}{=} D_{w,\mu}(M),\label{BNZ2}
\end{align}
where 
$(a)$ follows from the definition \eqref{SB2} of $D_{w,\mu}(M)$,
$(b)$ follows from \eqref{BNZ}, and
$(c)$ follows from the invariance of Haar measure $\mu$.
The covariant GPOVM  $\overline{M}:= \int_G  M_g \mu(dg)$
satisfies 
\begin{align}
D_{w,\mu}( \overline{M})
=\int_G D_{w,\mu}(M_g)\mu(dg)
= D_{w,\mu}(M),
\end{align}
which implies the relation
\begin{align}
&\min_{M:\hbox{GPOVM}}D_{w,\mu}(M)\notag\\
\ge & \min_{M:\hbox{cov. GPOVM}}D_{w,\mu}(M) .
\label{VB4}
\end{align}

Also, we have
\begin{align}
&D_{w}(M_g)
\stackrel{(a)}{=} \max_{g'\in G} D_{w,g'}(M_g)\notag\\
\stackrel{(b)}{=}& \max_{g'\in G} D_{w,g^{-1}g'}(M)
=D_{w}(M),
\end{align}
where
$(a)$ follows from the definition \eqref{SB3} of $D_{w}(M)$ and
$(b)$ follows from \eqref{BNZ}.
The covariant GPOVM  $\overline{M}$
satisfies 
\begin{align}
&D_{w}( \overline{M})
=\max_{g\in G} D_{w,g}(\overline{M})\notag\\
=&\max_{g\in G} \int_G
D_{w,g}(M_{g'}) \mu(dg') \notag\\
\le & 
 \int_G \max_{g\in G}
D_{w,g}(M_{g'}) \mu(dg') \notag\\
= & 
 \int_G 
 D_{w}(M_{g'}) \mu(dg') 
= D_{w}(M),
\end{align}
which implies the relation
\begin{align}
\min_{M:\hbox{GPOVM}}D_{w}(M)
\ge \min_{M:\hbox{cov. GPOVM}}D_{w}(M).\label{VB5}
\end{align}
The combination of 
\eqref{VB1},
\eqref{VB2},
\eqref{VB3},
\eqref{VB4}, and 
\eqref{VB5} implies the desired relations.
\end{proof}

\section{Structure of GPOVM}\label{S3}
To study the minimization in Theorem \ref{Th1},
we need to discuss the structure of GPOVM.
For this aim, we consider structure of 
the representation space.
We denote the representation on the first system 
${\cal H}$ as
\begin{align}
{\cal H}= \bigoplus_{\lambda \in \hat{G}_f}
{\cal U}_\lambda \otimes \mathbb{C}^{n_\lambda},
\end{align}
where 
$n_\lambda$ expresses the multiplicity of the representation 
space ${\cal U}_\lambda$.
Here, we denote the representation on 
${\cal U}_\lambda$ by $f_\lambda$.
We denote the set of irreducible representations appearing in $f$
by $\hat{G}_f$.
Then, the whole system is written as
\begin{align}
&{\cal K}\otimes {\cal H}\notag\\
= &
\Big(\bigoplus_{\lambda \in \hat{G}_f}
{\cal U}_\lambda \otimes \mathbb{C}^{n_\lambda} \Big)
\otimes 
\Big(\bigoplus_{\lambda' \in \hat{G}_f}
\mathbb{C}^{d_{\lambda'}} \otimes \mathbb{C}^{n_{\lambda'}} \Big).
\end{align}
By considering \eqref{NJ1} and \eqref{NJ2},
the application of $g$ 
is written as
\begin{align}
&|f(g)\rangle\rangle=
f(g)|I_{\cal H}\rangle\rangle
=f (g)
 \bigoplus_{\lambda \in \hat{G}_f}
| I_{\lambda}\rangle\rangle \otimes | I_{n_\lambda} \rangle\rangle \notag\\
=&
 \bigoplus_{\lambda \in \hat{G}_f}
f_\lambda (g)| I_{\lambda}\rangle\rangle \otimes | I_{n_\lambda} \rangle\rangle \notag\\
=&
 \bigoplus_{\lambda \in \hat{G}_f}
| f_\lambda (g) \rangle\rangle \otimes | I_{n_\lambda} \rangle\rangle.
\end{align}

The average state $\rho_{\mu}$ with respect to the Haar measure $\mu$ is 
rewritten as
\begin{align}
\rho_{\mu}
=&\bigoplus_{\lambda \in \hat{G}_f}
d_\lambda^{-1} I_\lambda \otimes I_\lambda \otimes 
| I_{n_\lambda} \rangle\rangle\langle\langle I_{n_\lambda}|.
\end{align}
The condition \eqref{NB1} is rewritten as
\begin{align}
\sum_{\lambda \in \hat{G}_f}
d_\lambda^{-1} 
\Tr M(G)
I_\lambda \otimes I_\lambda \otimes 
| I_{n_\lambda} \rangle\rangle\langle\langle I_{n_\lambda}|
=1\label{AAJ}.
\end{align}
This condition can be converted to the following condition for $T$ via 
the condition \eqref{NB8};
\begin{align}
\sum_{\lambda \in \hat{G}_f}
d_\lambda^{-1} 
\Tr T
I_\lambda \otimes I_\lambda \otimes 
| I_{n_\lambda} \rangle\rangle\langle\langle I_{n_\lambda}|
=1\label{AA2}.
\end{align}
This description is useful for considering the relation with parallel scheme.

\section{Parallel scheme}\label{S4}
Next, we discuss the parallel scheme \cite{BDM,Luis,BBM04,CDPS,CDS,Imai_2009,H06,MH16-9,Chiribella-2011}.
The aim of this section is to show that 
the performance 
$\min_{T: \rank T=1, \eqref{NB8}}
D_{w,\mu}(M_T)$
can be realized by Parallel scheme.

To have a large multiplicity for each irreducible 
representation spaces ${\cal U}_\lambda$,
we employ a reference system $\mathbb{C}^{l}$ 
with dimension $l$. Then,
we have
\begin{align}
{\cal H}\otimes \mathbb{C}^{l}
= \bigoplus_{\lambda \in \hat{G}_f}
{\cal U}_\lambda \otimes \mathbb{C}^{l n_\lambda}.\label{NMU}
\end{align}
However, when the input state is a pure state, 
the orbit is restricted into the following space by choosing 
a suitable subspace $\mathbb{C}^{\min(d_\lambda,l n_\lambda)}$
of $\mathbb{C}^{l n_\lambda}$.
That is, our representation space can be considered as follows.
\begin{align}
\bigoplus_{\lambda \in \hat{G}_f}
{\cal U}_\lambda \otimes \mathbb{C}^{\min(d_\lambda,l n_\lambda)}.
\end{align}
In the following, we consider the above case.
We denote the projection to $
{\cal U}_\lambda \otimes \mathbb{C}^{\min(d_\lambda,l n_\lambda)}$
by $P_\lambda$.
When $l \ge d_\lambda/n_\lambda$ for any $\lambda \in \hat{G}_f$,
our representation is given as
\begin{align}
{\cal H}':=\bigoplus_{\lambda \in \hat{G}_f}
{\cal U}_\lambda \otimes \mathbb{C}^{d_\lambda}.\label{FG1}
\end{align}

A parallel strategy is given as a pair of 
a choice of an input state $\rho$ on ${\cal H}'$
and a choice of a POVM $\Pi$ on ${\cal H}'$.
The error probability is written as
\begin{align}
D_w(\rho,\Pi):=
\int_{G} w(g,\hat{g}) \Tr \Pi(d\hat{g}) f(g) \rho f(g)^\dagger.
\end{align}

In particular, as stated below, the reference
\cite[Theorem 1]{MH16-9} showed that 
the optimization of the average of the error $w$ satisfying the condition \eqref{NMG} under a parallel strategy 
can be achieved by the following simple strategy.
The input state is a pure state 
$|\psi \rangle $ on ${\cal H}'$.
The choice of our POVM can be restricted to the following covariant POVM $\Pi_{cov}$;
\begin{align}
\Pi_{cov}(B):= \int_B f(g)^\dagger |F\rangle \langle F| f(g) \mu(dg)
\end{align}
for $B \in {\cal B}(G)$, where
\begin{align}
|F\rangle:= \bigoplus_{\lambda \in \hat{G}_f} \sqrt{d_\lambda} | I_{\lambda}\rangle\rangle.
\end{align}
Hence, this parallel strategy is characterized by
the input state $|\psi \rangle$ and is denoted by 
$S[|\psi \rangle] $.
The error probability is written as
\begin{align}
D_w(|\psi \rangle):=D_w(|\psi \rangle \langle \psi|,\Pi_{cov}).
\end{align}
Then, the preceding result \cite[Theorem 1]{MH16-9}
is stated as follows. 
\begin{proposition}[\protect{\cite[Theorem 1]{MH16-9}}]\label{Pro}
Assume the condition \eqref{NMG}.
Then, we have
\begin{align}
\min_{\rho,\Pi}D_{w,\mu}(\rho,\Pi)
=\min_{|\psi \rangle}D_w(|\psi \rangle).
\end{align}
\end{proposition}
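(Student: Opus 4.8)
The plan is to establish the two inequalities separately. The inequality $\min_{\rho,\Pi}D_{w,\mu}(\rho,\Pi)\le\min_{|\psi\rangle}D_w(|\psi\rangle)$ is immediate, since the right-hand side merely restricts the left-hand optimization to pure input states paired with the single fixed covariant POVM $\Pi_{cov}$, and each such pair $(|\psi\rangle\langle\psi|,\Pi_{cov})$ is a legitimate parallel strategy. All the work lies in the reverse inequality, and the strategy is: take an arbitrary pair $(\rho,\Pi)$, reduce $\Pi$ to covariant form, reduce $\rho$ to a pure state, and then transfer all remaining freedom out of the covariant POVM's seed and into the pure input state, leaving the canonical seed $|F\rangle\langle F|$ in place.

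First I would apply the POVM form of the quantum Hunt-Stein theorem \cite[Theorem 2]{Hol-Cov}, \cite[Theorem 4.3.1]{Holevo}: averaging $\Pi$ over the group action yields a covariant POVM whose Bayesian error, under the invariance \eqref{NMG}, does not exceed $D_{w,\mu}(\rho,\Pi)$ --- the parallel-scheme counterpart of the averaging argument in the proof of Theorem \ref{Th1}. A covariant POVM on ${\cal H}'$ is determined by a positive seed $\Xi$ via $\Pi(B)=\int_B f(g)^\dagger\Xi f(g)\mu(dg)$, so $D_{w,\mu}$ becomes an affine functional of the pair $(\rho,\Xi)$. Linearity in $\rho$ lets me replace $\rho$ by an extreme point, a pure state $|\psi\rangle\langle\psi|$; and extremality among admissible seeds lets me take $\Xi=|\xi\rangle\langle\xi|$ rank one.

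The decisive step is to force $|\xi\rangle=|F\rangle=\bigoplus_{\lambda\in\hat{G}_f}\sqrt{d_\lambda}|I_\lambda\rangle\rangle$. Here I would use that in the standard form \eqref{FG1} each multiplicity space has dimension exactly $d_\lambda=\dim{\cal U}_\lambda$, so via the double-ket identification \eqref{AHY} the $\lambda$-components of the state and the seed are square operators on ${\cal U}_\lambda$: $|\psi\rangle=\bigoplus_\lambda|B_\lambda\rangle\rangle$ and $|\xi\rangle=\bigoplus_\lambda|C_\lambda\rangle\rangle$. Using $f_\lambda(g)|A\rangle\rangle=|f_\lambda(g)A\rangle\rangle$ and $\langle\langle C|A\rangle\rangle=\Tr C^\dagger A$, the conditional density of the estimate collapses, after combining $g$ and $\hat g$ into a single element $h$, to a function of $\sum_\lambda\Tr[C_\lambda^\dagger f_\lambda(h)B_\lambda]$. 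Cyclicity rewrites each term as $\Tr[f_\lambda(h)B_\lambda C_\lambda^\dagger]$, so the substitution $(C_\lambda,B_\lambda)\mapsto(\sqrt{d_\lambda}I_\lambda,\,d_\lambda^{-1/2}B_\lambda C_\lambda^\dagger)$ leaves every such trace --- hence the entire distribution and therefore $D_{w,\mu}$ --- unchanged while fixing the seed to $|F\rangle$ and modifying only the input to $|\psi'\rangle:=\bigoplus_\lambda|d_\lambda^{-1/2}B_\lambda C_\lambda^\dagger\rangle\rangle$.

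The point I expect to be the main obstacle is the normalization bookkeeping attached to this transfer. I would first verify by Schur's lemma that $\Pi_{cov}$ built from $|F\rangle$ genuinely resolves the identity on ${\cal H}'$ (the factors $\sqrt{d_\lambda}$ are chosen precisely for this), so that $\int_G|\langle F|f(\hat g)f(g)|\psi'\rangle|^2\mu(d\hat g)=\langle\psi'|\psi'\rangle$ for every $g$. Since the transferred density is pointwise equal to the original one, which integrates to $\langle\psi|\psi\rangle=1$, the mass of the transferred density is also $1$, and the identity above forces $\|\psi'\|=1$ automatically; thus the dimension match $d_\lambda=\dim{\cal U}_\lambda$ is exactly what makes the seed-normalization constraint turn into the state-normalization constraint. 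Combining the Hunt-Stein reduction, the extreme-point reduction to a pure state, and this seed-to-state transfer yields $\min_{\rho,\Pi}D_{w,\mu}(\rho,\Pi)\ge\min_{|\psi\rangle}D_w(|\psi\rangle)$, which with the trivial inequality gives the asserted equality and reproduces \cite[Theorem 1]{MH16-9}.
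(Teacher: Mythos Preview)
The paper does not give its own proof of Proposition~\ref{Pro}: it is quoted verbatim from \cite[Theorem~1]{MH16-9} as a known result, and the paper then remarks that its Theorems~\ref{Th1} and~\ref{Th2} recover it as a corollary of a strictly stronger statement. Thus the ``paper's proof'' is the indirect route: embed every parallel strategy into the class of GPOVMs satisfying \eqref{AA}, apply the GPOVM Hunt--Stein theorem (Theorem~\ref{Th1}) to reduce to a rank-one covariant GPOVM $M_T$, and then use the simulation of Theorem~\ref{Th2} to realize $M_T$ by a parallel strategy $(|\psi[T]\rangle,\Pi_{cov})$.

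Your approach is the direct one, staying entirely inside the parallel scheme: POVM Hunt--Stein to make $\Pi$ covariant, convexity to make $\rho$ pure, and then the cyclicity identity $\Tr C_\lambda^\dagger f_\lambda(h)B_\lambda=\Tr f_\lambda(h)B_\lambda C_\lambda^\dagger$ to push the seed into the input state. This is essentially the original argument behind \cite[Theorem~1]{MH16-9}, and it is correct in spirit. The one step that is stated too quickly is ``extremality among admissible seeds lets me take $\Xi=|\xi\rangle\langle\xi|$ rank one'': the seed constraint $\int f(g)^\dagger\Xi f(g)\,\mu(dg)=I$ unpacks, via Schur, into a block of matrix equations $\Tr_{{\cal U}_\lambda}\Xi_{\lambda\lambda}=d_\lambda I$, and extreme points of such a set are not automatically rank one. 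The cleanest fix is to bypass this claim: eigendecompose $\Xi=\sum_k|\xi_k\rangle\langle\xi_k|$, apply your transfer to each term, and observe that the resulting $\rho':=\sum_k|\psi_k'\rangle\langle\psi_k'|$ is a genuine density matrix (your own normalization argument via $\Pi_{cov}(G)=I$ gives $\Tr\rho'=1$), so $D_{w,\mu}(|\psi\rangle\langle\psi|,\Pi_\Xi)=D_{w,\mu}(\rho',\Pi_{cov})\ge\min_{|\psi'\rangle}D_w(|\psi'\rangle)$ by one more purification step.

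What each route buys: yours is self-contained within the parallel model and reproduces the cited reference; the paper's detour through GPOVMs yields the same equality but simultaneously shows that even indefinite-causal-order and adaptive strategies cannot beat $\min_{|\psi\rangle}D_w(|\psi\rangle)$, which is the actual point of the paper. Interestingly, the key algebraic identity is the same in both: the paper's computation $\langle X|f(g)|f(e)\rangle\rangle=\langle F|f(g)|\psi[T]\rangle$ in the proof of Theorem~\ref{Th2} is exactly your seed-to-state transfer, just written on the Choi side.
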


Our result is to state that 
the this value
$\min_{|\psi \rangle}D_w(|\psi \rangle)$ 
gives the optimal value even under much larger choices.
Any vector $|X\rangle$ in ${\cal H}\otimes {\cal K} $
has the following form
\begin{align}
|X\rangle= 
\bigoplus_{\lambda \in \hat{G}_f}
\bigoplus_{\lambda' \in \hat{G}_f}
|X_{\lambda,\lambda'}\rangle\rangle \otimes 
|Y_{\lambda,\lambda'}\rangle
\end{align}
on the system 
$\bigoplus_{\lambda \in \hat{G}_f}
\bigoplus_{\lambda' \in \hat{G}_f}
{\cal U}_\lambda \otimes {\cal U}_{\lambda'} \otimes 
\mathbb{C}^{n_\lambda} 
\otimes 
\mathbb{C}^{n_{\lambda'}} $,
where 
$|Y_{\lambda,\lambda'}\rangle$ is an element of $\mathbb{C}^{n_\lambda} 
\otimes 
\mathbb{C}^{n_{\lambda'}} $
and $X_{\lambda,\lambda'}$ is a matrix on 
${\cal U}_\lambda$ so that
$|X_{\lambda,\lambda'}\rangle\rangle $ is a vector in 
${\cal U}_\lambda \otimes {\cal U}_{\lambda'}$
as the way of \eqref{AHY}.
When the rank-one operator 
$T=|X\rangle \langle X|$ on ${\cal K}\otimes {\cal H} $
satisfies \eqref{NB8},
we choose the pure state $|\psi[T]\rangle$ as
\begin{align}
|\psi[T]\rangle:=
\bigoplus_{\lambda \in \hat{G}_f}
d_\lambda^{-1/2}
 \langle Y_{\lambda,\lambda}|I_{n_\lambda}\rangle \rangle
|X_{\lambda,\lambda}^T\rangle\rangle .
\end{align}
Then, our result is stated as follows.

\begin{theorem}\label{Th2}
The condition \eqref{AA2} guarantees the normalizing condition of
$|\psi[T]\rangle$.
Then, the behavior of the strategy 
$M_T $ is simulated by the above parallel strategy with the input state
$| \psi[T]\rangle$.
That is, we have
\begin{align}
\min_{T: \rank T=1, \eqref{NB8}}
D_{w,\mu}(M_T)
=\min_{|\psi \rangle}D_w(|\psi \rangle).
\end{align}
\end{theorem}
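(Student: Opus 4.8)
The plan is to reduce both $D_{w,\mu}(M_T)$ and the parallel quantity $D_w(|\psi[T]\rangle)$ to one and the same integral over $G\times G$, and then to match the integrands block by block in the isotypic decomposition. First I would use Lemma~\ref{LE1} together with $f(\hat g)|f(g)\rangle\rangle=|f(\hat g g)\rangle\rangle$ (valid up to the projective phase, which drops out of moduli) and $|f(g)\rangle\rangle\in{\cal H}_0$ to write, for $T=|X\rangle\langle X|$,
\begin{align}
D_{w,\mu}(M_T)
=\int_G\!\!\int_G w(g,\hat g)\,
\big|\langle X|f(\hat g g)\rangle\rangle\big|^2\,\mu(d\hat g)\,\mu(dg).
\end{align}
The covariant POVM $\Pi_{cov}$ and $f(\hat g)f(g)=f(\hat g g)$ turn the parallel Bayesian average into the same double integral with $|\langle F|f(\hat g g)|\psi\rangle|^2$ in place of $|\langle X|f(\hat g g)\rangle\rangle|^2$, and this Bayesian average equals $D_w(|\psi\rangle)$ by the covariance of $\Pi_{cov}$ and of $w$. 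Hence the whole simulation identity reduces to the single pointwise statement $|\langle X|f(k)\rangle\rangle|=|\langle F|f(k)|\psi[T]\rangle|$ for every $k\in G$.

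The second step is to evaluate both amplitudes in ${\cal H}=\bigoplus_\lambda{\cal U}_\lambda\otimes\mathbb{C}^{n_\lambda}$. Since $|f(k)\rangle\rangle=\bigoplus_\lambda|f_\lambda(k)\rangle\rangle\otimes|I_{n_\lambda}\rangle\rangle$ is supported only on the diagonal blocks $\lambda=\lambda'$, the off-diagonal components $X_{\lambda,\lambda'},Y_{\lambda,\lambda'}$ with $\lambda\neq\lambda'$ never contribute; only $X_{\lambda,\lambda}$ and the scalars $\langle Y_{\lambda,\lambda}|I_{n_\lambda}\rangle\rangle$ survive. A direct computation using the Choi identity $(A\otimes B)|N\rangle\rangle=|ANB^{T}\rangle\rangle$ then gives
\begin{align}
\langle X|f(k)\rangle\rangle
=\sum_{\lambda\in\hat{G}_f}\langle Y_{\lambda,\lambda}|I_{n_\lambda}\rangle\rangle\,
\Tr\!\big[f_\lambda(k)\,X_{\lambda,\lambda}^{T}\big],
\end{align}
while for $|\psi\rangle=\bigoplus_\lambda|\psi_\lambda\rangle\rangle$ one finds $\langle F|f(k)|\psi\rangle=\sum_\lambda\sqrt{d_\lambda}\,\Tr[f_\lambda(k)\psi_\lambda]$. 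Comparing these two sums forces, for each $\lambda$, the coefficient matrices to agree, i.e. $\sqrt{d_\lambda}\,\psi_\lambda=\langle Y_{\lambda,\lambda}|I_{n_\lambda}\rangle\rangle X_{\lambda,\lambda}^{T}$, which is exactly the stated definition of $|\psi[T]\rangle$; the transpose is the footprint of the $B^{T}$ in the Choi identity. This establishes $D_{w,\mu}(M_T)=D_w(|\psi[T]\rangle)$.

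For the normalization claim I would expand $\Tr T\rho_\mu$ in the same basis. Using $\rho_\mu=\bigoplus_\lambda d_\lambda^{-1}I_\lambda\otimes I_\lambda\otimes|I_{n_\lambda}\rangle\rangle\langle\langle I_{n_\lambda}|$ and $T=|X\rangle\langle X|$, the trace again sees only the diagonal blocks and the $|I_{n_\lambda}\rangle\rangle$-component of $Y_{\lambda,\lambda}$, giving $\Tr T\rho_\mu=\sum_\lambda d_\lambda^{-1}\,|\langle Y_{\lambda,\lambda}|I_{n_\lambda}\rangle\rangle|^2\,\Tr[X_{\lambda,\lambda}^{\dagger}X_{\lambda,\lambda}]$. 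This is precisely $\langle\psi[T]|\psi[T]\rangle$ (the Frobenius norm is transpose-invariant), so condition \eqref{NB8}, equivalently \eqref{AA2}, is identical to $\langle\psi[T]|\psi[T]\rangle=1$.

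Finally I would assemble the equality of minima from two one-sided bounds. The simulation identity gives $D_{w,\mu}(M_T)=D_w(|\psi[T]\rangle)\ge\min_{|\psi\rangle}D_w(|\psi\rangle)$ for every admissible rank-one $T$, hence ``$\ge$''. For ``$\le$'', given any normalized $|\psi\rangle=\bigoplus_\lambda|\psi_\lambda\rangle\rangle$ I would exhibit a realizing $T$, e.g. $Y_{\lambda,\lambda}=n_\lambda^{-1/2}|I_{n_\lambda}\rangle\rangle$ and $X_{\lambda,\lambda}=(d_\lambda/n_\lambda)^{1/2}\psi_\lambda^{T}$ with vanishing off-diagonal blocks, which satisfies \eqref{NB8} and yields $|\psi[T]\rangle=|\psi\rangle$. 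Combining the bounds gives the asserted equality, and Proposition~\ref{Pro} identifies the common value with the optimal parallel performance. I expect the main obstacle to be the index bookkeeping in the second step: keeping the pair $(\lambda,\lambda')$, the multiplicity overlaps $\langle Y_{\lambda,\lambda}|I_{n_\lambda}\rangle\rangle$, and the Choi transposes mutually consistent, and in particular arguing cleanly that the off-diagonal blocks of $|X\rangle$ are invisible both to the error and to the normalization; everything else is a direct substitution.
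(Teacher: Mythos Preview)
Your proof is correct and follows essentially the same route as the paper: both hinge on the block-by-block amplitude identity $\langle X|f(k)\rangle\rangle=\langle F|f(k)|\psi[T]\rangle$, obtained from the isotypic decomposition together with the Choi/transpose relation, and then use it to match the two error integrals pointwise. You are in fact slightly more thorough than the paper in spelling out the normalization computation and the reverse inequality (by exhibiting an explicit $T$ realizing a given $|\psi\rangle$), which the paper leaves implicit in the framework.
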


Since Theorem \ref{Th1} addresses a much larger class than 
the parallel scheme,
our result, Theorem \ref{Th2}, recovers Proposition \ref{Pro}
as a much stronger result.

\begin{proof}
We have
\begin{align}
&  \langle X| f(g) |f(e) \rangle \rangle \notag\\
=&
\sum_{\lambda \in \hat{G}_f}
\langle \langle X_{\lambda,\lambda}| \otimes \langle Y_{\lambda,\lambda}|
f(g)| I_{\lambda}\rangle \rangle \otimes |I_{n_\lambda}\rangle \rangle \notag\\
=&
\sum_{\lambda \in \hat{G}_f}
 \langle Y_{\lambda,\lambda}|I_{n_\lambda}\rangle \rangle
 \langle \langle X_{\lambda,\lambda}|
f(g)| I_{\lambda}\rangle \rangle\notag\\
=&
\sum_{\lambda \in \hat{G}_f}
 \langle Y_{\lambda,\lambda}|I_{n_\lambda}\rangle \rangle
\Tr X_{\lambda,\lambda}^T f(g) \notag\\
=&
\sum_{\lambda \in \hat{G}_f}
 \langle Y_{\lambda,\lambda}|I_{n_\lambda}\rangle \rangle
\langle\langle I_{\lambda}| 
f(g) |X_{\lambda,\lambda}^T\rangle \rangle \notag\\
=&
\sum_{\lambda \in \hat{G}_f}
d_\lambda^{1/2}\langle\langle I_{\lambda}| 
f(g)
d_\lambda^{-1/2} \langle Y_{\lambda,\lambda}|I_{n_\lambda}\rangle \rangle
 |X_{\lambda,\lambda}^T\rangle \rangle \notag\\
=&
\langle F| f(g) | \psi[T]\rangle .
\end{align}

Then, we have
\begin{align}
&\Tr f(g)^\dagger T f(g) |f(e)\rangle \rangle \langle \langle f(e)|\notag\\
=&
\langle \langle f(e)| f(g)^\dagger |X\rangle \langle X| f(g) |f(e)\rangle \rangle \notag\\
=&
\langle F| f(g) | \psi[T]\rangle 
\langle \psi[T]| f(g)^\dagger | F\rangle \notag \\
=&
\Tr f(g)^\dagger | F\rangle \langle F| f(g) 
| \psi[T]\rangle \langle \psi[T]| .
\end{align}
Replacing $g$ by $\hat{g}g'$, we have
\begin{align}
&\Tr f(\hat{g})^\dagger T f(\hat{g}) 
|f(g')\rangle \rangle \langle \langle f(g')|\notag\\
=&
\Tr f(\hat{g})^\dagger | F\rangle \langle F| f(\hat{g}) 
f(g')| \psi[T]\rangle \langle \psi[T]|f(g')^\dagger .
\end{align}
Therefore, the behavior of the strategy $M_T $ with the unknown action $g'$
is simulated by the above parallel strategy with the input state
$| \psi[T]\rangle$ with the unknown action $g'$.
That is,
the performance 
$\min_{T: \rank T=1, \eqref{NB8}}
D_{w,\mu}(M_T)$
can be realized by Parallel scheme.
\end{proof}

In the conversion in the above proof,
the freedom of 
the choice of our covariant GPOVM  
is converted to the choice of our input state in the parallel strategy.
Here, the state in the general setting
corresponds to the measurement in the parallel strategy.

\section{Conclusion}\label{S5}
This paper shows that
indefinite causal order strategy nor adaptive strategy 
does not improve the performance of estimation of unknown operation 
given by a projective unitary representation of a compact group $G$.
In fact, the papers \cite{MH16-9} already obtained the optimal estimation
in various examples under the parallel strategy.
These preceding results give the optimal performance even under 
a more general setting including 
both indefinite causal order strategy and
adaptive strategy. 

Finite group is a special case of a compact group,
the paper \cite{HHHH} initiated the discrimination of the unitary operations given by 
a projective unitary representation of a finite group,
and the paper \cite[Theorem 1]{BMQ22} showed that 
neither indefinite causal order strategies nor adaptive strategies improve
this type of unitary discrimination under the uniform prior distribution.
Since our error function covers the case with the delta function 
in the finite group case,
our result recovers the above result by \cite{BMQ22} as a special case.
Although the paper \cite[Theorem 1]{BMQ22} studied only the average 
discrimination error probability, our analysis covers the worst discrimination error probability as well because our analysis contains the minimix criterion.

We have formulated a covariant GPOVM as Definition 2, and 
have shown that it can be written by using an operator $T$ satisfying the condition \eqref{NB8}. The condition \eqref{NB8} can be simplified to \eqref{AA2}.
We can expect that this formulation can be used for more general cases.
However, this paper does not consider the case with noisy channels.
It is an interesting open problem to extend the obtained result to the noisy case. 
Although the generalized Hunt-Stein theorem can be easily extended to the noisy case, the derivation in Section \ref{S4} highly depends on the noiseless setting. 
Therefore, we can expect that 
the noisy case has an advantage of 
indefinite causal order strategy and/or adaptive strategy
over parallel strategy even under the group covariant setting.
It is an interesting open problem whether 
a similar approach can be used in more general situations, such as the noisy case.

The obtained unusefulness of 
indefinite causal order strategy and adaptive strategy
is related to group symmetry.
In fact, similar facts have been proved in the contest of secure network coding.
The preceding studies show that 
adaptive modifications of the input information in each attacked  
edge by the adversary does not improve the information gain by the adversary when all coding operations are given as linear operations
for classical secure network coding \cite{HC1,HOKC,CH,H24}
and quantum secure network coding \cite{KOH,OKH,HS}.
But,
the paper \cite{HC2} shows that this kind of improvement exists
when nonlinear network code is applied.
Therefore, it is another interesting open problem to clarify
why secure network coding and the estimation of group action
commonly have the unusefulness of adaptive strategy.

\section*{Acknowledgement}
The author is thankful to Professor Xin Wang, Dr. Yu-Ao Chen, and Mr. Chenghong Zhu
for helpful discussions and informing the reference \cite{BMQ}.
The author is supported in part by the National Natural Science Foundation of China (Grant No. 62171212) 
and
the General R \& D Projects of 1+1+1 
CUHK-CUHK(SZ)-GDST Joint Collaboration Fund (Grant No.
GRDP2025-022).

\appendix

\section{Proof of Lemma \ref{LE1}}\label{A1}
{\bf Step 1:} Construction of $T$.

We define the $\delta$ neighborhood $U_{\delta,g'} $ of the element $g'$ as
\begin{align}
U_{\delta,g'}:=
\{g \in G| \|f(g)-f(g')\| < \delta\}.
\end{align}
We define the operator $T_\delta:=M(U_\delta)/\mu(U_\delta)$
with $ U_\delta:=U_{\delta,e}$.

Since
\begin{align}
\Tr M( B_g )\rho_\mu
=&
\Tr f(g)^\dagger M( B )f(g) \rho_\mu \notag\\
=&\Tr M(B)\rho_\mu,
\end{align}
we have
\begin{align}
&\Tr \sqrt{\rho_\mu} P_0  M(B) P_0 \sqrt{\rho_\mu}
=\Tr M(B)\rho_\mu \notag\\
=&\int_G \Tr  M( B_g ) \rho_\mu\mu(dg) 
= \Tr \int_G M( B_g )\mu(dg) \rho_\mu\notag\\
=& \Tr \mu(B) M( G )  \rho_\mu
=\mu(B) .\Label{MNM2}
\end{align}
Hence, the maximum eigenvalue of 
$\sqrt{\rho_\mu} P_0  M(B) P_0 \sqrt{\rho_\mu}$
is upper bounded by $\mu(B)$, i.e., 
the maximum eigenvalue of 
$P_0  M(B) P_0 $
is upper bounded by $\frac{\mu(B)}{\lambda_{\min} (\rho_\mu)}$, 
where $\lambda_{\min} (\rho_\mu)$ is the minimum eigenvalue
of $ \rho_\mu$. 
That is,
\begin{align}
P_0 M(B) P_0 \le 
\frac{\mu(B)}{\lambda_{\min} (\rho_\mu)} P_0,\label{ZG1}
\end{align}
Also, \eqref{MNM2} implies 
\begin{align}
\Tr T_\delta\rho_\mu=1\Label{NVB}.
\end{align}
Since $T_\delta\ge 0$,
$T_\delta$ belongs to the compact set
$\{ X\ge 0 | \Tr X\rho_\mu=1\}$.
There exists a sequence $\delta_n$ such that 
$\delta_n \to 0$ and the limit 
$T:=\lim_{n\to 0} P_0T_{\delta_n}P_0$ exists.
Since $T_{\delta_n} $ is positive semi-definite, 
$T$ is also positive semi-definite.
In addition, \eqref{NVB} guarantees \eqref{NB8}.
The remaining issue is the proof of \eqref{SLP}.

{\bf Step 2:} Proof of \eqref{SLP}.

To show \eqref{SLP}, we prepare 
the following notations as Fig. \ref{fig}.
\begin{align}
V_{\delta',g} &:= cl (U_{\delta',g})\setminus U_{\delta',g} \notag \\
R_{\delta',\delta,g} &:=\{ g_1\in U_{\delta',g}|
\min_{g_2 \in V_{\delta',g}} 
\|f(g_1)-f(g_2)\| > \delta\} \notag \\
\tilde{R}_{\delta',\delta,g} &:=
U_{\delta',g} \setminus {R}_{\delta',\delta,g} 
\end{align}
for $\delta'>2\delta$, where
$cl (U_{\delta',g})$ is the closure of the set $U_{\delta',g}$.

\begin{figure}[tbhp]
\begin{center}
\includegraphics[scale=0.5]{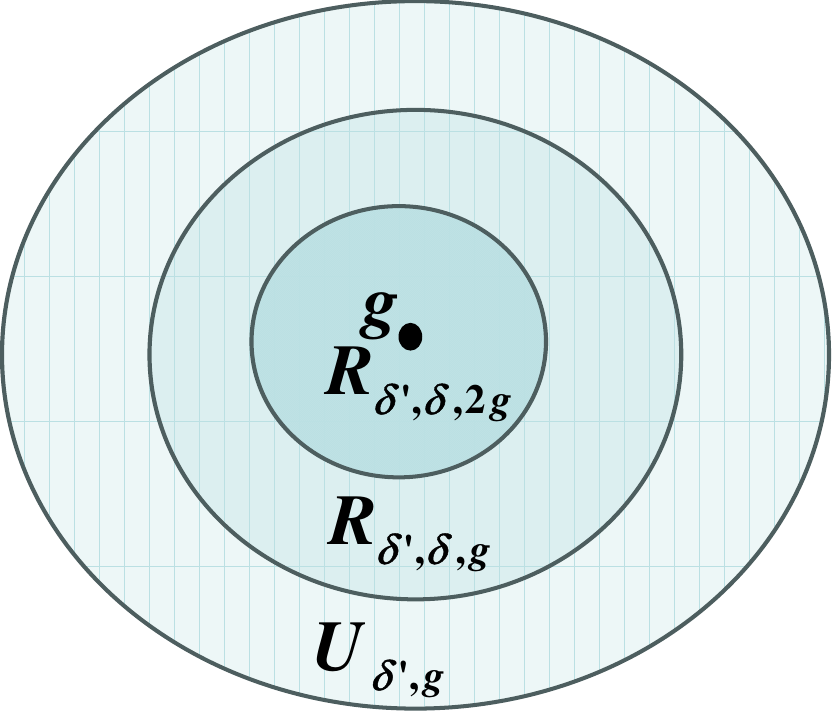}
\end{center}
\caption{Hierarchical structure of sets}
\Label{fig}
\end{figure}

For $g_1 \in R_{\delta',\delta,g}$, we have
$U_{\delta,g_1}\subset U_{\delta',g}$.
Since $g_1 \in U_{\delta,g'}$ for $g'\in U_{\delta,g_1}
\subset U_{\delta',g}$,
we have
\begin{align}
&\mu(U_\delta)\int_{U_{\delta',g}} f(g')^\dagger T_\delta f(g') \mu(dg')\notag\\
=&
\int_{U_{\delta',g}} M(U_{\delta,g'}) \mu(dg') 
\ge 
\mu(U_\delta) M(R_{\delta',\delta,g}),
\end{align}
which implies that
the contribution in the subset $R_{\delta',\delta,g}$
is well reflected in $\int_{U_{\delta',g}} f(g)^\dagger T_\delta f(g)$.

We choose $g'\in \tilde{R}_{\delta',2\delta,g} $,
we have $g'\in U_{\delta',g}$ and 
$\|f(g')-f(g_2)\| > 2 \delta$
for $g_2 \in V_{\delta',g}$.
For $g''\in U_{\delta,g'}$,
we have $\|f(g'')-f(g')\| < \delta$.
Hence,
$\|f(g'')-f(g_2)\| > \|f(g')-f(g_2)\|- \|f(g'')-f(g')\|
> \delta$, which implies 
$U_{\delta,g'} \subset R_{\delta',\delta,g}$.
That is, for $g'\in \tilde{R}_{\delta',2\delta,g} $,
$U_{\delta,g'}\setminus R_{\delta',\delta,g}$ is empty.
Thus,
\begin{align}
0 \le &
\mu(U_\delta) \int_{U_{\delta',g}} f(g')^\dagger T_\delta f(g') \mu(dg')\notag\\
&-\mu(U_\delta)  M(R_{\delta',\delta,g}) 
\notag \\
=&
\int_{U_{\delta',g}} M(U_{\delta,g'}\setminus R_{\delta',\delta,g}) \mu(dg') \notag\\
=&
\int_{R_{\delta',2\delta,g}} \!\! M(U_{\delta,g'}\setminus R_{\delta',\delta,g}) \mu(d g') \notag\\
\le &
\int_{R_{\delta',2\delta,g}} \!\!\!\!M(U_{\delta,g'}) \mu(dg') .
\label{ZG2}
\end{align}
Since $\mu(U_{\delta,g'})=\mu(U_{\delta})$,
the combination of
\eqref{ZG1} and \eqref{ZG2} yields
\begin{align}
0 \le &
\mu(U_\delta) \int_{U_{\delta',g}} 
f(g')^\dagger P_0 T_\delta P_0  f(g') \mu(dg')
\notag \\
&-\mu(U_\delta) P_0  M(R_{\delta',\delta,g}) P_0 
\notag \\
\le &
\int_{R_{\delta',2\delta,g}} P_0 M(U_{\delta,g'}) P_0 \mu(dg') \notag\\
\le &
\int_{R_{\delta',2\delta,g}} 
\frac{\mu(U_{\delta})}{\lambda_{\min} (\rho_\mu)} P_0
 \mu(dg') \notag\\
= & \frac{\mu(U_{\delta})\mu(\tilde{R}_{\delta',2\delta,g})}{\lambda_{\min} (\rho_\mu)} P_0,
\end{align}
which implies 
\begin{align}
0 \le &
P_0 M(U_{\delta',g}) P_0-
P_0 M(R_{\delta',\delta,g})P_0 \notag\\
\le & \frac{\mu(\tilde{R}_{\delta',2\delta,g})}{\lambda_{\min} (\rho_\mu)} P_0.
\end{align}
Thus, we have
\begin{align}
&\Bigg\|
P_0 M(U_{\delta',g}) P_0-
\int_{U_{\delta',g}} f(g)^\dagger P_0 T_\delta P_0 f(g)
\mu(dg)
\Bigg\| \notag \\
\le & \frac{\mu(
\tilde{R}_{\delta',2\delta,g})}{\lambda_{\min} (\rho_\mu)} .
\end{align}
Since $\mu(\tilde{R}_{\delta',2\delta_n,g})\to 0$ 
and $P_0 T_{\delta_n}P_0 \to T$ as $n\to \infty$,
we have
\begin{align}
P_0 M(U_{\delta',g})P_0=
\int_{U_{\delta',g}} f(g)^\dagger T f(g)\mu(dg).
\end{align}
Since any Borel set can be generated by 
sets $\{U_{\delta',g}\}$,
we have \eqref{SLP} for any Borel set $B$.

\bibliographystyle{quantum}
\bibliography{references2}

\begin{thebibliography}{10}

\bibitem{BMQ}
Jessica Bavaresco, Mio Murao, and Marco~T\'ulio Quintino.
\newblock ``Strict hierarchy between parallel, sequential, and
  indefinite-causal-order strategies for channel discrimination''.
\newblock \href{https://dx.doi.org/10.1103/PhysRevLett.127.200504}{Phys. Rev.
  Lett. {\bf 127}, 200504}~(2021).

\bibitem{GE}
Giulio Chiribella and Daniel Ebler.
\newblock ``Quantum speedup in the identification of cause effect relations''.
\newblock \href{https://dx.doi.org/10.1038/s41467-019-09383-8}{Nature
  Communications {\bf 10}, 1472}~(2019).

\bibitem{GDP}
Giulio Chiribella, Giacomo~M. D'Ariano, and Paolo Perinotti.
\newblock ``Memory effects in quantum channel discrimination''.
\newblock \href{https://dx.doi.org/10.1103/PhysRevLett.101.180501}{Phys. Rev.
  Lett. {\bf 101}, 180501}~(2008).

\bibitem{HHLW}
Aram~W. Harrow, Avinatan Hassidim, Debbie~W. Leung, and John Watrous.
\newblock ``Adaptive versus nonadaptive strategies for quantum channel
  discrimination''.
\newblock \href{https://dx.doi.org/10.1103/PhysRevA.81.032339}{Phys. Rev. A
  {\bf 81}, 032339}~(2010).

\bibitem{PLLP}
Stefano Pirandola, Riccardo Laurenza, Cosmo Lupo, and Jason~L. Pereira.
\newblock ``Fundamental limits to quantum channel discrimination''.
\newblock \href{https://dx.doi.org/10.1038/s41534-019-0162-y}{npj Quantum
  Information {\bf 5}, 50}~(2019).

\bibitem{ZP}
Quntao Zhuang and Stefano Pirandola.
\newblock ``Ultimate limits for multiple quantum channel discrimination''.
\newblock \href{https://dx.doi.org/10.1103/PhysRevLett.125.080505}{Phys. Rev.
  Lett. {\bf 125}, 080505}~(2020).

\bibitem{RM}
Milajiguli Rexiti and Stefano Mancini.
\newblock ``Discriminating qubit amplitude damping channels''.
\newblock \href{https://dx.doi.org/10.1088/1751-8121/abed68}{Journal of Physics
  A: Mathematical and Theoretical {\bf 54}, 165303}~(2021).

\bibitem{KW}
Vishal Katariya and Mark~M. Wilde.
\newblock ``Evaluating the advantage of adaptive strategies for quantum channel
  distinguishability''.
\newblock \href{https://dx.doi.org/10.1103/PhysRevA.104.052406}{Phys. Rev. A
  {\bf 104}, 052406}~(2021).

\bibitem{Chi}
Giulio Chiribella.
\newblock ``Perfect discrimination of no-signalling channels via quantum
  superposition of causal structures''.
\newblock \href{https://dx.doi.org/10.1103/PhysRevA.86.040301}{Phys. Rev. A
  {\bf 86}, 040301}~(2012).

\bibitem{MBHK}
Mark~M. Wilde, Mario Berta, Christoph Hirche, and Eneet Kaur.
\newblock ``Amortized channel divergence for asymptotic quantum channel
  discrimination''.
\newblock \href{https://dx.doi.org/10.1007/s11005-020-01297-7}{Letters in
  Mathematical Physics {\bf 110}, 2277}~(2020).

\bibitem{WW}
Xin Wang and Mark~M. Wilde.
\newblock ``Resource theory of asymmetric distinguishability for quantum
  channels''.
\newblock \href{https://dx.doi.org/10.1103/PhysRevResearch.1.033169}{Phys. Rev.
  Res. {\bf 1}, 033169}~(2019).

\bibitem{FFRS}
Kun Fang, Omar Fawzi, Renato Renner, and David Sutter.
\newblock ``Chain rule for the quantum relative entropy''.
\newblock \href{https://dx.doi.org/10.1103/PhysRevLett.124.100501}{Phys. Rev.
  Lett. {\bf 124}, 100501}~(2020).

\bibitem{Ha09}
Masahito Hayashi.
\newblock ``Discrimination of two channels by adaptive methods and its
  application to quantum system''.
\newblock \href{https://dx.doi.org/10.1109/TIT.2009.2023726}{IEEE Transactions
  on Information Theory {\bf 55}, 3807--3820}~(2009).

\bibitem{SHW}
Farzin Salek, Masahito Hayashi, and Andreas Winter.
\newblock ``Usefulness of adaptive strategies in asymptotic quantum channel
  discrimination''.
\newblock \href{https://dx.doi.org/10.1103/PhysRevA.105.022419}{Phys. Rev. A
  {\bf 105}, 022419}~(2022).

\bibitem{IM}
Ryo Ito and Ryuhei Mori.
\newblock ``Lower bounds on the error probability of multiple quantum channel
  discrimination by the bures angle and the trace distance''~(2021).

\bibitem{BMQ22}
Jessica Bavaresco, Mio Murao, and Marco~T\'{u}lio Quintino.
\newblock ``Unitary channel discrimination beyond group structures: Advantages
  of sequential and indefinite-causal-order strategies''.
\newblock \href{https://dx.doi.org/10.1063/5.0075919}{Journal of Mathematical
  Physics {\bf 63}, 042203}~(2022).

\bibitem{Fujiwara-2003}
Akio Fujiwara and Hiroshi Imai.
\newblock ``Quantum parameter estimation of a generalized pauli channel''.
\newblock \href{https://dx.doi.org/10.1088/0305-4470/36/29/314}{Journal of
  Physics A: Mathematical and General {\bf 36}, 8093}~(2003).

\bibitem{Fujiwara-2004}
Akio Fujiwara.
\newblock ``Estimation of a generalized amplitude-damping channel''.
\newblock \href{https://dx.doi.org/10.1103/PhysRevA.70.012317}{Phys. Rev. A
  {\bf 70}, 012317}~(2004).

\bibitem{H11}
Masahito Hayashi.
\newblock ``Comparison between the cramer-rao and the mini-max approaches in
  quantum channel estimation''.
\newblock
  \href{https://dx.doi.org/https://doi.org/10.1007/s00220-011-1239-4}{Communications
  in Mathematical Physics {\bf 304}, 689--709}~(2011).

\bibitem{Giovannetti11}
Vittorio Giovannetti, Seth Lloyd, and Lorenzo Maccone.
\newblock ``Advances in quantum metrology''.
\newblock Nature Photonics {\bf 5}, 222–229~(2011).
\newblock  url:~\url{https://doi.org/10.1038/nphoton.2011.35}.

\bibitem{Giovannetti06}
Vittorio Giovannetti, Seth Lloyd, and Lorenzo Maccone.
\newblock ``Quantum metrology''.
\newblock \href{https://dx.doi.org/10.1103/PhysRevLett.96.010401}{Phys. Rev.
  Lett. {\bf 96}, 010401}~(2006).

\bibitem{Giovannetti}
Vittorio Giovannetti, Seth Lloyd, and Lorenzo Maccone.
\newblock ``Quantum-enhanced measurements: Beating the standard quantum
  limit''.
\newblock \href{https://dx.doi.org/10.1126/science.1104149}{Science {\bf 306},
  1330--1336}~(2004).

\bibitem{Jonathan}
Jonathan~A. Jones, Steven~D. Karlen, Joseph Fitzsimons, Arzhang Ardavan,
  Simon~C. Benjamin, G.~Andrew~D. Briggs, and John J.~L. Morton.
\newblock ``Magnetic field sensing beyond the standard quantum limit using
  10-spin noon states''.
\newblock \href{https://dx.doi.org/10.1126/science.1170730}{Science {\bf 324},
  1166--1168}~(2009).

\bibitem{Imai}
Hiroshi Imai and Akio Fujiwara.
\newblock ``Geometry of optimal estimation scheme for su(d) channels''.
\newblock \href{https://dx.doi.org/10.1088/1751-8113/40/16/009}{Journal of
  Physics A: Mathematical and Theoretical {\bf 40}, 4391}~(2007).

\bibitem{Okamoto_2008}
Ryo Okamoto, Holger~F Hofmann, Tomohisa Nagata, Jeremy~L O'Brien, Keiji Sasaki,
  and Shigeki Takeuchi.
\newblock ``Beating the standard quantum limit: phase super-sensitivity of
  n-photon interferometers''.
\newblock \href{https://dx.doi.org/10.1088/1367-2630/10/7/073033}{New Journal
  of Physics {\bf 10}, 073033}~(2008).

\bibitem{Nagata07}
Tomohisa Nagata, Ryo Okamoto, Jeremy~L. O'Brien, Keiji Sasaki, and Shigeki
  Takeuchi.
\newblock ``Beating the standard quantum limit with four-entangled photons''.
\newblock \href{https://dx.doi.org/10.1126/science.1138007}{Science {\bf 316},
  726--729}~(2007).

\bibitem{Thomas-Peter}
Nicholas Thomas-Peter, Brian~J. Smith, Animesh Datta, Lijian Zhang, Uwe Dorner,
  and Ian~A. Walmsley.
\newblock ``Real-world quantum sensors: Evaluating resources for precision
  measurement''.
\newblock \href{https://dx.doi.org/10.1103/PhysRevLett.107.113603}{Phys. Rev.
  Lett. {\bf 107}, 113603}~(2011).

\bibitem{Hayashi2024}
Masahito Hayashi and Yingkai Ouyang.
\newblock ``Finding the optimal probe state for multiparameter quantum
  metrology using conic programming''.
\newblock \href{https://dx.doi.org/10.1038/s41534-024-00905-x}{npj Quantum
  Information {\bf 10}, 111}~(2024).

\bibitem{AY}
Anian Altherr and Yuxiang Yang.
\newblock ``Quantum metrology for non-markovian processes''.
\newblock \href{https://dx.doi.org/10.1103/PhysRevLett.127.060501}{Phys. Rev.
  Lett. {\bf 127}, 060501}~(2021).

\bibitem{LHYY}
Qiushi Liu, Zihao Hu, Haidong Yuan, and Yuxiang Yang.
\newblock ``Optimal strategies of quantum metrology with a strict hierarchy''.
\newblock \href{https://dx.doi.org/10.1103/PhysRevLett.130.070803}{Phys. Rev.
  Lett. {\bf 130}, 070803}~(2023).

\bibitem{KGAD}
Stanis\l{}aw Kurdzia\l{}ek, Wojciech G\'orecki, Francesco Albarelli, and
  Rafa\l{} Demkowicz-Dobrza\ifmmode~\acute{n}\else \'{n}\fi{}ski.
\newblock ``Using adaptiveness and causal superpositions against noise in
  quantum metrology''.
\newblock \href{https://dx.doi.org/10.1103/PhysRevLett.131.090801}{Phys. Rev.
  Lett. {\bf 131}, 090801}~(2023).

\bibitem{ZYC}
Xiaobin Zhao, Yuxiang Yang, and Giulio Chiribella.
\newblock ``Quantum metrology with indefinite causal order''.
\newblock \href{https://dx.doi.org/10.1103/PhysRevLett.124.190503}{Phys. Rev.
  Lett. {\bf 124}, 190503}~(2020).

\bibitem{LHYY24}
Qiushi Liu, Zihao Hu, Haidong Yuan, and Yuxiang Yang.
\newblock ``Fully-optimized quantum metrology: Framework, tools, and
  applications''.
\newblock
  \href{https://dx.doi.org/https://doi.org/10.1002/qute.202400094}{Advanced
  Quantum Technologies {\bf 7}, 2400094}~(2024).

\bibitem{BDM}
V.~Bu\ifmmode~\check{z}\else \v{z}\fi{}ek, R.~Derka, and S.~Massar.
\newblock ``Optimal quantum clocks''.
\newblock \href{https://dx.doi.org/10.1103/PhysRevLett.82.2207}{Phys. Rev.
  Lett. {\bf 82}, 2207--2210}~(1999).

\bibitem{Luis}
A.~Luis and J.~Pe\ifmmode~\check{r}\else \v{r}\fi{}ina.
\newblock ``Optimum phase-shift estimation and the quantum description of the
  phase difference''.
\newblock \href{https://dx.doi.org/10.1103/PhysRevA.54.4564}{Phys. Rev. A {\bf
  54}, 4564--4570}~(1996).

\bibitem{BBM04}
E.~Bagan, M.~Baig, and R.~Mu\~noz Tapia.
\newblock ``Quantum reverse engineering and reference-frame alignment without
  nonlocal correlations''.
\newblock \href{https://dx.doi.org/10.1103/PhysRevA.70.030301}{Phys. Rev. A
  {\bf 70}, 030301}~(2004).

\bibitem{CDPS}
G.~Chiribella, G.~M. D'Ariano, P.~Perinotti, and M.~F. Sacchi.
\newblock ``Efficient use of quantum resources for the transmission of a
  reference frame''.
\newblock \href{https://dx.doi.org/10.1103/PhysRevLett.93.180503}{Phys. Rev.
  Lett. {\bf 93}, 180503}~(2004).

\bibitem{CDS}
G.~Chiribella, G.~M. D'Ariano, and M.~F. Sacchi.
\newblock ``Optimal estimation of group transformations using entanglement''.
\newblock \href{https://dx.doi.org/10.1103/PhysRevA.72.042338}{Phys. Rev. A
  {\bf 72}, 042338}~(2005).

\bibitem{Imai_2009}
Hiroshi Imai and Masahito Hayashi.
\newblock ``Fourier analytic approach to phase estimation in quantum systems''.
\newblock \href{https://dx.doi.org/10.1088/1367-2630/11/4/043034}{New Journal
  of Physics {\bf 11}, 043034}~(2009).

\bibitem{H06}
Masahito Hayashi.
\newblock ``Parallel treatment of estimation of su(2) and phase estimation''.
\newblock
  \href{https://dx.doi.org/https://doi.org/10.1016/j.physleta.2006.01.043}{Physics
  Letters A {\bf 354}, 183--189}~(2006).

\bibitem{MH16-9}
Masahito Hayashi.
\newblock ``Fourier analytic approach to quantum estimation of group action''.
\newblock \href{https://dx.doi.org/10.1007/s00220-016-2738-0}{Communications in
  Mathematical Physics {\bf 347}, 3--82}~(2016).

\bibitem{Chiribella-2011}
Giulio Chiribella.
\newblock ``Group theoretic structures in the estimation of an unknown unitary
  transformation''.
\newblock \href{https://dx.doi.org/10.1088/1742-6596/284/1/012001}{Journal of
  Physics: Conference Series {\bf 284}, 012001}~(2011).

\bibitem{HVK}
Masahito Hayashi, Sai Vinjanampathy, and L~C Kwek.
\newblock ``Resolving unattainable cramer-rao bounds for quantum sensors''.
\newblock \href{https://dx.doi.org/10.1088/1361-6455/aaf348}{Journal of Physics
  B: Atomic, Molecular and Optical Physics {\bf 52}, 015503}~(2018).

\bibitem{HLY}
Masahito Hayashi, Zi-Wen Liu, and Haidong Yuan.
\newblock ``Global heisenberg scaling in noisy and practical phase
  estimation''.
\newblock \href{https://dx.doi.org/10.1088/2058-9565/ac5d7e}{Quantum Science
  and Technology {\bf 7}, 025030}~(2022).

\bibitem{gorecki2020pi}
Wojciech G{\'o}recki, Rafa{\l} Demkowicz-Dobrza{\'n}ski, Howard~M Wiseman, and
  Dominic~W Berry.
\newblock ``$\pi$-corrected heisenberg limit''.
\newblock \href{https://dx.doi.org/10.1103/PhysRevLett.124.030501}{Physical
  review letters {\bf 124}, 030501}~(2020).

\bibitem{Haya04}
Masahito Hayashi.
\newblock ``Heisenberg scaling based on population coding''.
\newblock \href{https://dx.doi.org/10.22331/q-2025-02-26-1648}{{Quantum} {\bf
  9}, 1648}~(2025).

\bibitem{RAGINSKY200111}
Maxim Raginsky.
\newblock ``A fidelity measure for quantum channels''.
\newblock
  \href{https://dx.doi.org/https://doi.org/10.1016/S0375-9601(01)00640-5}{Physics
  Letters A {\bf 290}, 11--18}~(2001).

\bibitem{Hol-Cov}
A.S. Holevo.
\newblock ``Covariant measurements and uncertainty relations''.
\newblock
  \href{https://dx.doi.org/https://doi.org/10.1016/0034-4877(79)90072-7}{Reports
  on Mathematical Physics {\bf 16}, 385--400}~(1979).

\bibitem{Holevo}
Alexander~S Holevo.
\newblock ``Probabilistic and statistical aspects of quantum theory''.
\newblock \href{https://dx.doi.org/10.1007/978-88-7642-378-9}{Edizioni della
  Normale}. ~(2011).

\bibitem{Group2}
Masahito Hayashi.
\newblock ``A group theoretic approach to quantum information''.
\newblock
  \href{https://dx.doi.org/https://doi.org/10.1007/978-3-319-45241-8}{Springer}.
  United States~(2017).

\bibitem{HHHH}
T.~Hashimoto, A.~Hayashi, M.~Hayashi, and M.~Horibe.
\newblock ``Unitary-process discrimination with error margin''.
\newblock \href{https://dx.doi.org/10.1103/PhysRevA.81.062327}{Phys. Rev. A
  {\bf 81}, 062327}~(2010).

\bibitem{HC1}
Masahito Hayashi and Ning Cai.
\newblock ``Asymptotically secure network code for active attacks''.
\newblock \href{https://dx.doi.org/10.1109/TCOMM.2021.3057112}{IEEE
  Transactions on Communications {\bf 69}, 3245--3259}~(2021).

\bibitem{HOKC}
Masahito Hayashi, Masaki Owari, Go~Kato, and Ning Cai.
\newblock ``Reduction theorem for secrecy over linear network code for active
  attacks''.
\newblock \href{https://dx.doi.org/10.3390/e22091053}{Entropy{\bf 22}}~(2020).

\bibitem{CH}
Ning Cai and Masahito Hayashi.
\newblock ``Secure network code for adaptive and active attacks with
  no-randomness in intermediate nodes''.
\newblock \href{https://dx.doi.org/10.1109/TIT.2019.2957078}{IEEE Transactions
  on Information Theory {\bf 66}, 1428--1448}~(2020).

\bibitem{H24}
Masahito Hayashi.
\newblock ``Secure network coding with adaptive and active attack''.
\newblock \href{https://dx.doi.org/10.1007/978-3-031-82014-4_9}{Pages
  222--239}.
\newblock Springer Nature Switzerland. Cham~(2025).

\bibitem{KOH}
Go~Kato, Masaki Owari, and Masahito Hayashi.
\newblock ``Single-shot secure quantum network coding for general multiple
  unicast network with free one-way public communication''.
\newblock \href{https://dx.doi.org/10.1109/TIT.2021.3078812}{IEEE Transactions
  on Information Theory {\bf 67}, 4564--4587}~(2021).

\bibitem{OKH}
Masaki Owari, Go~Kato, and Masahito Hayashi.
\newblock ``Single-shot secure quantum network coding on butterfly network with
  free public communication''.
\newblock \href{https://dx.doi.org/10.1088/2058-9565/aa8ac4}{Quantum Science
  and Technology {\bf 3}, 014001}~(2017).

\bibitem{HS}
Masahito Hayashi and Seunghoan Song.
\newblock ``Quantum state transmission over partially corrupted quantum
  information network''.
\newblock \href{https://dx.doi.org/10.1103/PhysRevResearch.2.033079}{Phys. Rev.
  Res. {\bf 2}, 033079}~(2020).

\bibitem{HC2}
Masahito Hayashi and Ning Cai.
\newblock ``Secure non-linear network code over a one-hop relay network''.
\newblock \href{https://dx.doi.org/10.1109/JSAIT.2021.3053697}{IEEE Journal on
  Selected Areas in Information Theory {\bf 2}, 296--305}~(2021).

\end{thebibliography}

\end{document}